\newtheorem{theorem}{Theorem}
\newtheorem{definition}{Definition}
\DeclareMathOperator{\tr}{Tr}
\begin{document}
\raggedbottom	
	\title{No-resonance conditions, random matrices, and quantum chaotic models}
	
\author{Jonathon Riddell}
\email{ppj.p.riddell@bham.ac.uk}
\affiliation{School of Physics and Astronomy, University of Birmingham, Edgbaston, Birmingham, B15 2TT, UK}
\affiliation{School of Physics and Astronomy, University of Nottingham, Nottingham, NG7 2RD, UK}
\affiliation{Centre for the Mathematics and Theoretical Physics of Quantum Non-Equilibrium Systems, University of Nottingham, Nottingham, NG7 2RD, UK}
\author{Nathan  Pagliaroli}
\email{npagliar@uwo.ca}
\affiliation{Department of Mathematics, Western University, N6A 3K7, Canada}
	\begin{abstract}
    In this article we investigate no-resonance conditions for quantum many body chaotic systems and random matrix models. No-resonance conditions are properties of the spectrum of a model, usually employed as a theoretical tool in the analysis of late time dynamics. The first order no-resonance condition holds when a spectrum is non-degenerate, while higher order no-resonance conditions imply sums of an equal number of energies are non-degenerate outside of permutations of the indices. This resonance condition is usually assumed to hold for quantum chaotic models. In this work we use several tests from random matrix theory to demonstrate that the  statistics of sums of eigenvalues, that are of interest to due to the no-resonance conditions, have Poisson statistics, and lack level repulsion. This result is produced for both a quantum chaotic Hamiltonian as well as the Gaussian Unitary Ensemble and Gaussian Orthogonal Ensemble. This implies some models may have violations of the no-resonance condition or "near" violations. We finish the paper by generalizing important bounds in quantum equilibration theory to cases where the no-resonance conditions are violated, and to the case of random matrix models. 
	\end{abstract}
	
		\maketitle

		One of the most ubiquitous observations in many body physics is the connection between the spectral statistics of many body quantum systems and that of random matrices. Quantum systems are not chaotic in the classical sense since unitary time evolution guarantees that the overlap between two states in time is constant. This excludes the classical notion of chaos in quantum systems for which we observe exponential sensitivity to small differences in initial conditions. However, their spectral statistics behave qualitatively differently if their corresponding classical limit is integrable or chaotic. If the classical limit is chaotic, the spectral statistics of the quantum Hamiltonian agree with the predictions of Random Matrix Theory (RMT) and we refer to these models as quantum chaotic \cite{berry87}. The notion of quantum chaos can be extended to quantum systems that do not have a well-defined classical limit \cite{d2016quantum}. 
  
  An extremely important property of the spectral statistics of a quantum chaotic   Hamiltonian is the presence of level-repulsion amongst neighbouring energies. This level-repulsion was first modeled for heavy atomic nuclei by Wigner using Gaussian ensembles of random matrices.  Since Wigner's work, it has been established that features of the spectrum of classically chaotic quantum
systems are accurately described by various ensembles of random matrices \cite{porter1965statistical,brody1981random,guhr1998random,Berry76,Berry77}.  The connection between the spectrum of quantum chaotic systems and random matrices has been well studied in single particle systems \cite{jalabert90,marcus92,Milner01,Friedman01,stockmann90,Sridhar91,
Moore94,Steck01,Hensinger01,Chaudhury09,Weinstein02,Zhang2022,Rigol202111,Vidmar202111}, along with many body systems \cite{Santos2010,Santos2010v2,Rigol10,Kollath_2010,Santos2012,Richter_2020,Atas2013,Atas_2013v2,Prosen2020}  and recently has seen a surge of interest in the case of circuit or periodically driven type models \cite{chan2018solution,Luca2014,Bertini2021,Bertini2018}. The first to extend Wigner's work were Dyson and Mehta in the series of papers \cite{dyson1962I,dyson1962II,dyson1962III,dysonmethaIV,dysonmehta1963V}. In particular, Dyson classified the three most immediately relevant ensembles: the Gaussian Unitary Ensemble, the Gaussian Orthogonal Ensemble, and the Gaussian Symplectic Ensemble in what is known as the ``threefold way" \cite{dyson1962threefold}. Of the most immediate interest to this work is the Gaussian Orthogonal Ensemble (GOE). The Bohigas, Giannoni, and Schmit (BGS) conjecture \cite{bohigas1984characterization} states that the GOE has the same level-spacing as a wide class of quantum systems with classical limits \cite{bohrdt2017scrambling,andreev1996quantum,muller2004semiclassical}. Let $E_{0} \leq  E_{1} \leq  E_{2}, ...$ be a sequence of unfolded energy eigenvalues of the GOE; then Wigner surmised the distribution of average consecutive level-spacings, that is the average of $s_{k} = E_{k+1} - E_{k}$ for all $k$ is 
  \begin{equation}\label{eq:WignerSurmise}
    p(s) =\frac{\pi s}{2}e^{-\pi^{2} s^{2}/4}.
\end{equation}
To see how to unfold a spectrum see Chapter 6  of \cite{mehta2004random} or \cite{Bruus1997}. It is important to note that Wigner's Surmise is an approximation \cite{mehta1960statistical}, of the actual distribution, originally derived in \cite{Jimbo1980DensityMO}. This was further simplified in terms of  Painlev\'e  transcendentals in\cite{forrester2000exact}.

In contrast to level-repulsion, if one considers the level-spacing of i.i.d. random variables, not only does one not see repulsion, but rather one sees Poisson statistics (sometimes referred to as neutral statistics \cite{imbrie2016many}), which has been used as a marker for non-chaotic systems \cite{d2016quantum}. For a proof of this fact we refer the reader to \cite{livan2018introduction}. In particular after unfolding the spacing of such systems, the distribution is Poisson
\begin{equation}\label{eq:Poissonspacing}
    p(s) = e^{-s}.
\end{equation}
The presence of level-repulsion and GOE spectral statistics is a hallmark test of quantum chaos, while Poisson statistics are associated with integrable or non-chaotic models.

A key consequence of the presence of level-repulsion is that the value of the probability density at zero is zero, meaning that we can assume our model has a non-degenerate spectrum. This observation is useful, for example, when considering dephasing arguments, which has recently been particularly popular in the quantum equilibration community \cite{alhambra2020time, Riddell2020,gogolin2016equilibration,masanes2013complexity,wilming2018equilibration,heveling2020compelling,Campos_Venuti_2010,Knipschild2020,carvalho2023equilibration}. If we consider the time-evolution of many dynamical functions under unitary dynamics, time-dependent terms in the series will often appear as the following:
\begin{equation}
     z \, e^{-i(E_{m}-E_{n})t},
\end{equation}
where $z$ is a complex number and $t$ is time. A typical example of such a series would be the expectation value of an observable for pure state time evolution. Terms such as these survive the infinite time average if and only if $E_{m} = E_{n}$. In the case of quantum chaotic Hamiltonians it is a safe assumption that any surviving term would imply that $m=n$, since we do not expect degeneracy due to the presence of level-repulsion. The degeneracies where $E_{m} = E_{n}$ and $m \not = n$ are examples of first order \textit{resonances}. However, in general,  dynamical functions can be more complex with terms such as
\begin{equation}\label{dynamical function}
     z \, e^{-i(E_{m_{1}}-E_{n_{1}}+ E_{m_{2}}-E_{n_{2}}+...)t}.
\end{equation}
Such terms appear in out-of-time-ordered correlators or other higher order correlation functions \cite{Riddell2020,riddell2021scaling,riddell2019out,Yoshida2019,Fortes2019,Shukla2022,Fortes_2020v2}.
  To discuss the terms that survive the infinite time average in equation \ref{dynamical function} we introduce the  $q$ no-resonance condition.
		
		\begin{definition} \label{def:genericspec} Let $H$ be a Hamiltonian with spectrum $H= \sum_j E_j \ket{E_j}\bra{E_j}$, and let $\Lambda_q,\Lambda'_q$ be two arbitrary sets of $q$ energy levels $\{ E_{j} \}$. $H$ satisfies the $q$ no-resonance condition if for all $\Lambda_q,\Lambda'_q$, the equality
\begin{equation}
    \sum_{j \in \Lambda_q } E_{j} =    \sum_{j \in \Lambda'_q } E_{j}
\end{equation}
implies that $\Lambda_q=\Lambda'_q$.
\end{definition}
By definition \ref{def:genericspec} the set of terms that satisfy the $q$ no-resonance condition are the minimum set of terms that survive the infinite time average as in equation \ref{dynamical function}. Terms that fall outside of definition \ref{def:genericspec} are referred to as $q$\textit{-resonances}. 
Typically in the literature it is suggested that quantum chaotic Hamiltonians satisfy definition \ref{def:genericspec} \cite{Mark2022,Srednicki99,Riddell2022}. 

This greatly  simplifies arguments involving infinite time averages in quantum chaotic models.  Despite this condition being somewhat common in the literature, studies only test this condition for the $q=1$ case where one finds level-repulsion governed by the Wigner-Dyson distribution \cite{Richter_2020, d2016quantum}. 

Of interest are the statistical properties of the sums and level-spacing of energy levels. We will refer to distribution of the sum of $q$ eigenvalues as the $q$-th order density of states and the corresponding level-spacing as the $q$-th order level-spacing . For the $q=2$ case, an explicit formula is known for the density of states \cite{khalkhali2022spectral}, but as far the authors can tell nothing is known about the second order level-spacing distribution. Note that a similar problem has been studied in the context of coupled kicked rotors, where the authors studied the transition from Poisson to RMT statistics \cite{Srivastava2016,Herrmann}. For zero coupling the problem is similar to testing the no-resonance condition for $q=2$, where the authors find Poisson statistics. In this article we go beyond the $q=2$ study and provide evidence that the Poisson statistics persist for $q\geq 2$. This result is non-trivial; for example, the problem for $q\geq 2$ is strikingly similar to studying free fermionic models at fixed particle number. Tight binding models for example have a large number of trivial degeneracies in the single particle spectrum resulting in a delta function around $s=0$ for the level spacing distribution.

		\section{Spectral statistics for a quantum chaotic Hamiltonian} \label{Sec:SpecStates}
		In this section we first investigate what the spectral statistics look like for a specific quantum chaotic model. In particular we study a Heisenberg type model with nearest- and next-nearest-neighbour interactions.
		\begin{align}
	{H} =  &\sum_{j=1}^L J_1 \left(  {S}_j^+ S_{j+1}^- + \text{h.c.}\right) + \gamma_1  \, {S}_j^Z    {S}_{j+1}^Z  \\ & +  J_2 \left(  {S}_j^+  {S}_{j+2}^- + \text{h.c.}\right)+ \gamma_2   {S}_j^Z  {S}_{j+2}^Z, \label{eq:hamiltonian}
\end{align}
where $(J_1,\gamma_1,J_2,\gamma_2) \nobreak = \nobreak (-1,1,-0.2,0.5)$ gives us a non-integrable model. This model has a free limit for $(J_1,0,0,0)$ and an interacting integrable limit for $(J_1,\gamma_1,0,0)$. Recently this model was confirmed to obey the eigenstate thermalization hypothesis \cite{LeBlond_2019}. We perform full spectrum exact diagonalization in the maximally symmetric sector of this model. In particular, this matrix conserves the total magnetization $m_z = \sum_j S_j^Z$, and is translation invariant. We choose to work in the sector such that $\langle m_z \rangle  = 0$ with quasi-momenta $k = 0$. This allows us to further diagonalize the model with the spatial reflection symmetry $P$ and the spin inversion symmetry $Z$. 

In this section we will focus on the spectral statistics for the cases $q=1$ as a benchmark, and $q = 2$, the first no-resonance condition that is unexplored in the literature. As we will show in the appendix, the behavior for $q > 2$ is qualitatively similar to $q=2$. First, let us establish that our model satisfies the usual tests for quantum chaos in the $q=1$ case. Perhaps the most common test is to investigate the level-spacing distribution $s_j = E_{j+1}-E_{j}$. The act of unfolding allows us to have a universal scale for the comparison of spectra of different Hamiltonians. The distribution of $s_j$ for a quantum chaotic model should exhibit Wigner-Dyson statistics. To unfold the spectrum we use Gaussian broadening. Namely we map our energies $E_k$ to $\epsilon_k$ as follows: 

\begin{equation}
    \epsilon_k = N(E_k),
\end{equation}

\begin{equation}
    N(E) = \int_{-\infty}^E \sum_k \frac{1}{\sigma_k \sqrt{2 \pi}} e^{-\frac{(e-E_k)^2}{2 \sigma_k^2}} de,
\end{equation}
where we use the same convention as in \cite{Bruus1997} and take 
\begin{equation}
    \sigma_k = 0.608 \alpha \Delta_k,
\end{equation}
where $\Delta = (E_{k+\alpha}-E_{k-\alpha})/{2\alpha}$ and we find that $\alpha = 20$ is quite suitable for our spectrum. 

\begin{figure}[h!]
\centering
\includegraphics[width=0.45\linewidth]{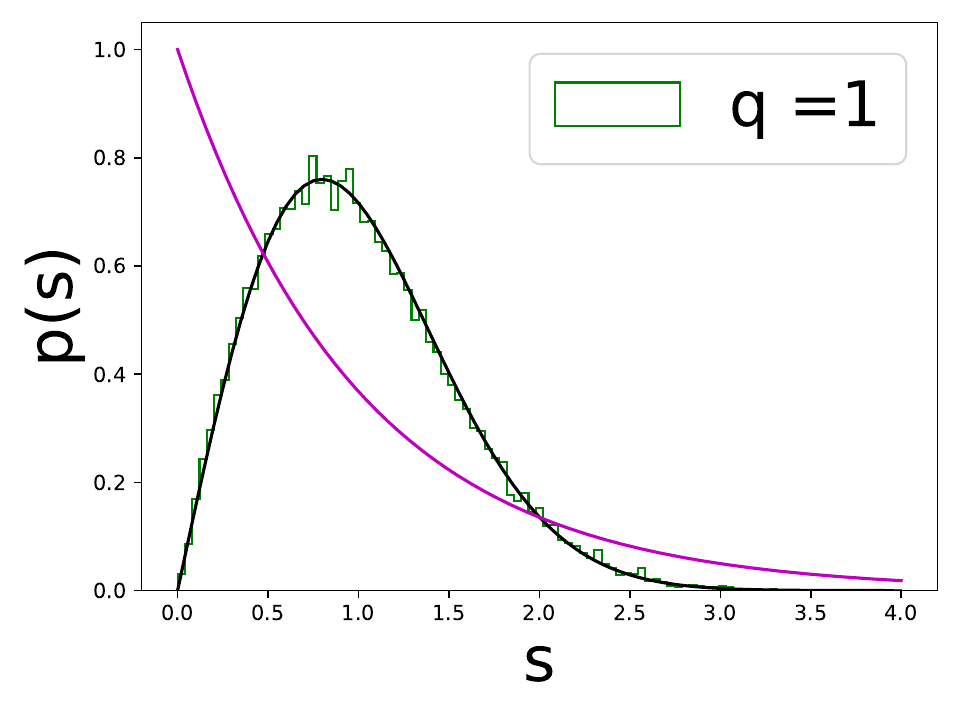}
\includegraphics[width=0.45\linewidth]{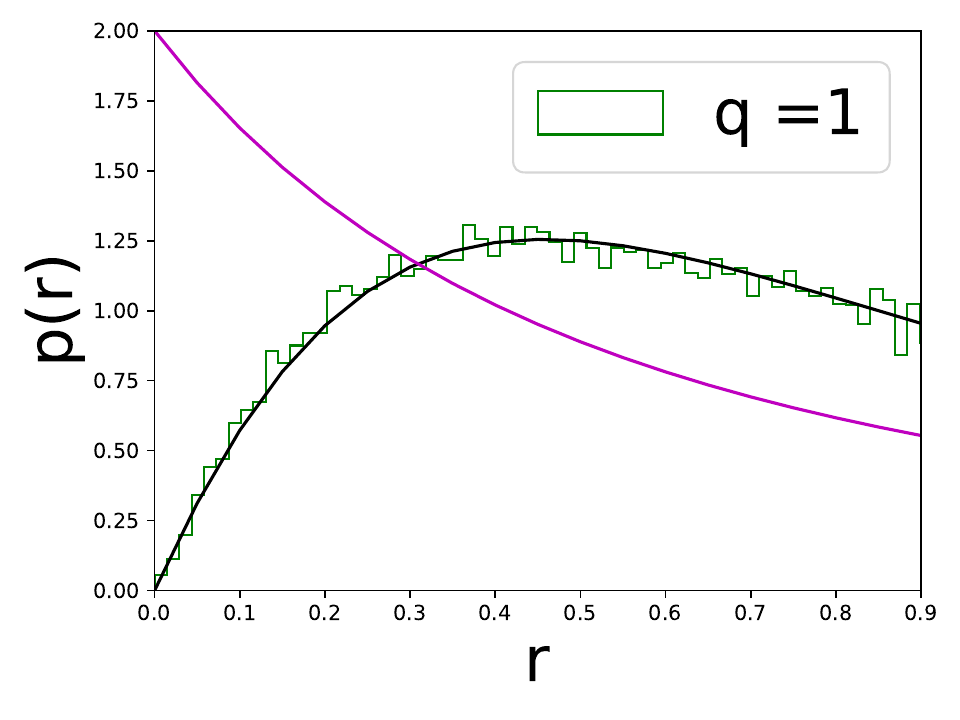}
\caption{ The left figure displays the first order level-spacing distribution for ($q=1$), $L = 24$ unfolded data, which looks approximately like a Wigner Surmise exhibiting level repulsion. In black we plot the Wigner Surmise and in purple we plot the Poisson distribution. In the right figure the results of the ratio test are shown for spectral statistics at $L = 24$.  }
\label{fig:q1}
\end{figure}
		Fig. \ref{fig:q1} demonstrates that our model for $q=1$ has level-repulsion and appears to have a level-spacing distribution that is  well approximated by the Wigner surmise. While this result shows us that our spectrum strongly resembles the predictions of RMT, the unfolding procedure is usually chosen to find such agreement, therefore it is desirable to perform a test that does not need unfolding. Such a test is given by investigating the distribution of ratios between successive gaps \cite{Rigol2021,oganesyan2007localization,atas2013distribution}.  We introduce the ratios
		\begin{equation} \label{eq:ratio}
		    r_j = \frac{\min \{s_j, s_{j+1} \}}{  \max \{s_j, s_{j+1} \}  },
		\end{equation}
		which tells us that $r_j \in [0,1]$. We emphasize that for the ratio test,   no unfolding procedure is required for $s_j$. This test can be done with the model's physical spectrum.
		For the GOE in \cite{atas2013distribution} it was analytically shown that the distribution of the $r_j$ for $3\times 3$ matrices is given by 
		\begin{equation} \label{eq:chaoticratio}
		    p(r) = \frac{27}{4} \frac{r+r^2}{\left( 1+r+r^2\right)^{\frac{5}{2}}}.
		\end{equation}
		If instead our energy levels were independent random variables, we would instead observe Poisson statistics, 
		
		\begin{equation} \label{eq:intratio}
		    p(r) = \frac{2}{(1+r)^2}.
		\end{equation}
		We see in Fig. \ref{fig:q1} (b) that our result experiences level-repulsion, agreeing with the distribution in equation \ref{eq:chaoticratio}.

		Next we consider the case for $q=2$. The spectrum we are now interested in is equivalent to the spectrum of the Hamiltonian,
		\begin{equation}
		    \hat{H}_2 = \hat{H} \otimes \mathbb{I} + \mathbb{I} \otimes \hat{H},
		\end{equation}
		which has the spectrum $\Lambda_{k,l} = E_k+E_l$. This construction introduces an unwanted symmetry in the spectrum of $\hat{H}_2$, namely that $\Lambda_{k,l} = \Lambda_{l,k}$, that is, the spectrum is invariant under permutations of the individual energies' indices. For $q=2$ this might be understood as a spatial reflection symmetry for a larger two component non-interacting system. Addressing this symmetry is simple. We only consider unique pairs of $(k,l)$, namely, we take $l>k$, where we also ignore the portion of the spectrum where $k=l$. Ignoring $k=l$ does not appear to significantly alter the results but allows us to eliminate trivial multiples of the $q=1$ spectrum. In fact, the contribution of the $q=l$ portion of the spectrum is 
vanishingly small compared to the total size of our spectrum. We further introduce a new index that orders the spectrum $\alpha = 1,2\dots$ such that $\Lambda_\alpha< \Lambda_{\alpha+1}$. With this new spectrum we can analyze the level-spacing and ratio distribution.

    \begin{figure}[h!]
    \centering
    \includegraphics[width=0.45\linewidth]{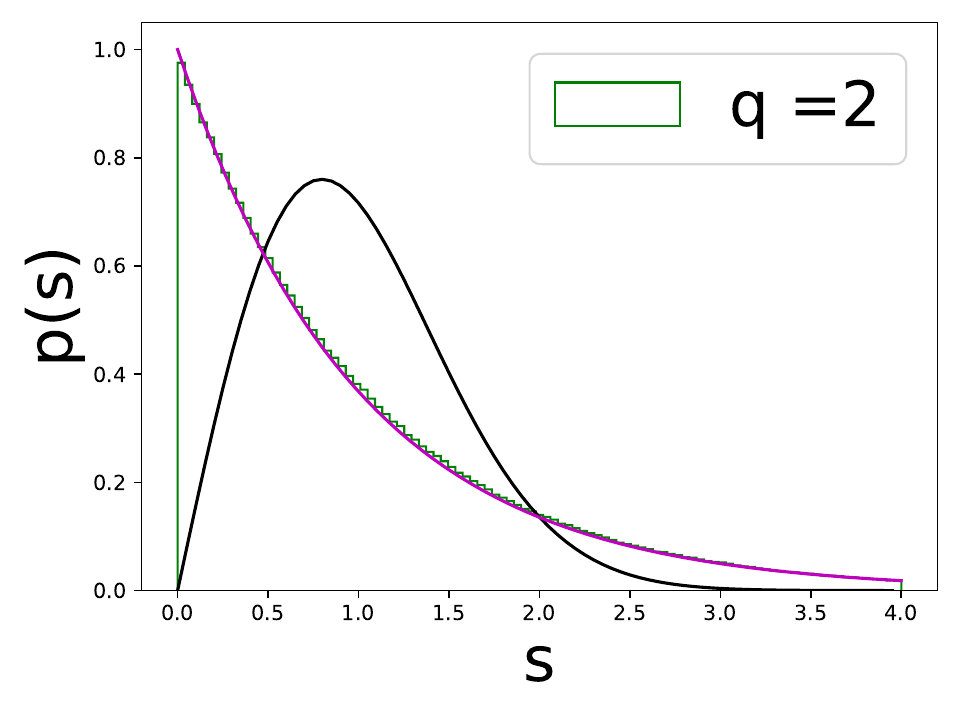}
\includegraphics[width=0.45\linewidth]{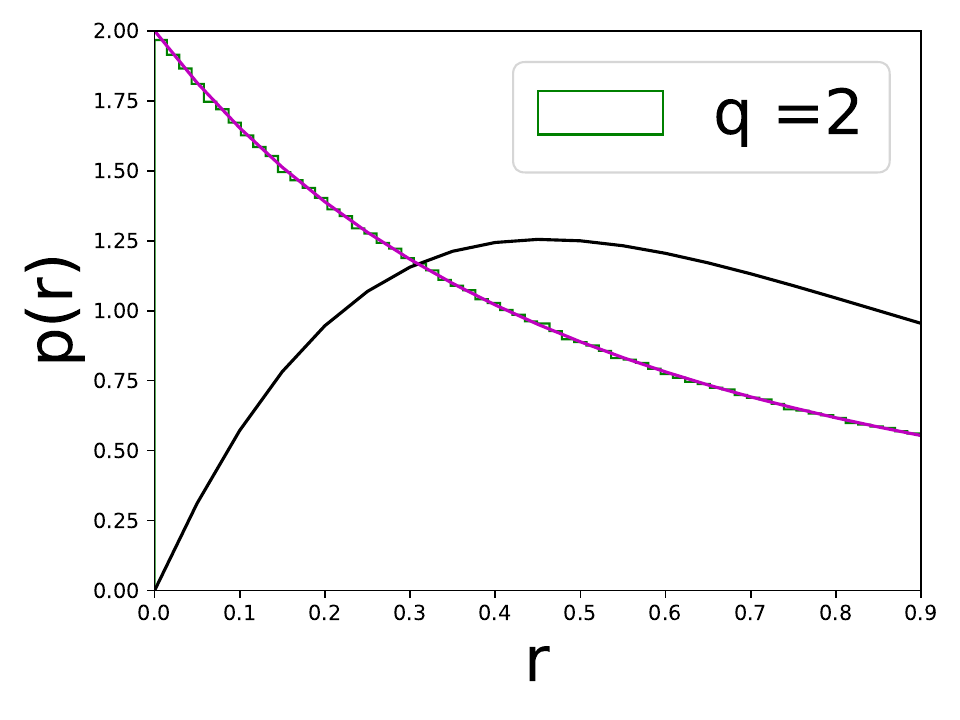}
\caption{ The left figure displays the second order level-spacing distribution ($q=2$) , $L = 20$ unfolded data, which looks approximately like a Poisson distribution. (b) The right figure displays the results of the ratio test for spectral statistics at $L = 20$ for $q=2$. In both plots we draw the GOE prediction in black and the independent random variable prediction (Poisson) in purple.   }
\label{fig:q2}
\end{figure}
 
Fig. \ref{fig:q2} indicates that the spectrum of $\hat{H}_2$ experiences Poissonstatistics. This is contrary to the $q=1$ case which has level-repulsion. Importantly this indicates that the spectrum of $\hat{H}_2$ behaves like an integrable model, and has gaps clustered around $s=0$. While this does not guarantee violations of the $q=2$ no-resonance theorem, it does make violations more likely. Likewise, we expect a large amount of pseudo violations such that $s_j = \Lambda_{j+1} - \Lambda_{j}\to 0$ to occur much faster than the average gap. This means that these violations would appear as resonances in the spectrum, unless very large time scales, potentially non-physically large, are considered. In light of this, results such as \cite{short2011equilibration,Riddell2022,Srednicki99,Mark2022} should be investigated to understand the effects of resonances. In the appendix \ref{sec:q34} we demonstrate that the Poisson statistics persist for higher values of $q$ and conjecture that Poisson statistics persists for all values of $q>1$. 

One further test we can perform is to test the actual average value of $r$ we observe in the ratio distribution. $\langle r \rangle = 2\ln 2 - 1 \approx 0.38629436112$ for Poisson systems and $\langle r \rangle = 4-2\sqrt{3} \approx 0.535898384$ for the GOE. Testing this quantity allows us to clearly observe convergence to the predictions of random matrix theory as a function of system size. We see this displayed in Fig. \ref{fig:ratioconverg}. In the right panel we see the test for $q=2$ which reveals a strong convergence in agreement with the Poisson predictions. The data at $L = 22$ gives $\langle r\rangle = 0.386294325894$, which confirms seven decimal points of convergence. Therefore, from the perspective of short range correlations in the spectrum we conclude that $\hat{H}_2$ obeys Poisson statistics.

    \begin{figure}[h!]
    \centering
    \includegraphics[width=0.45\linewidth]{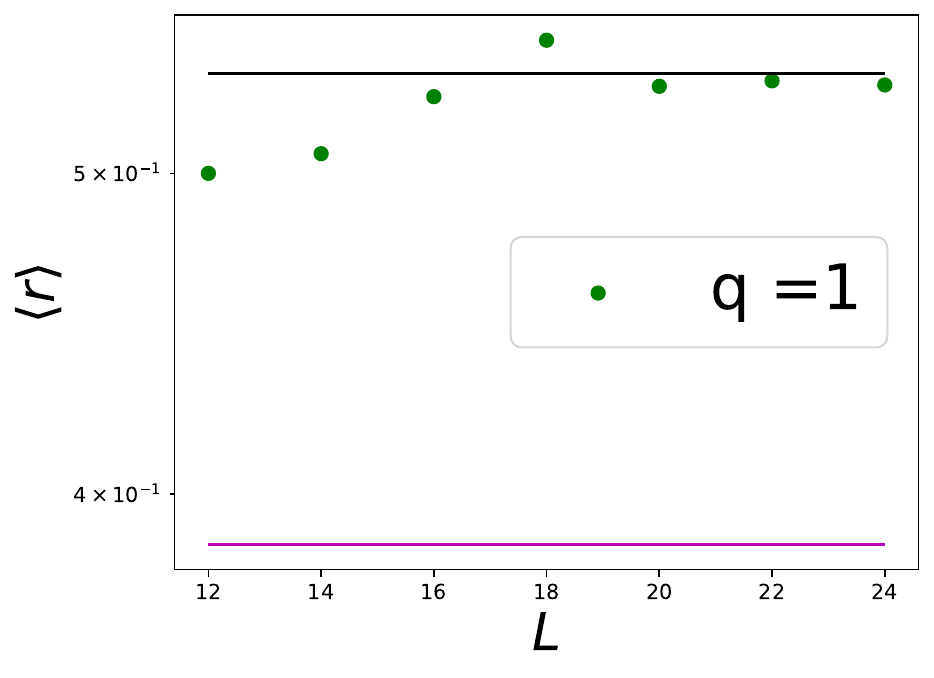}
\includegraphics[width=0.45\linewidth]{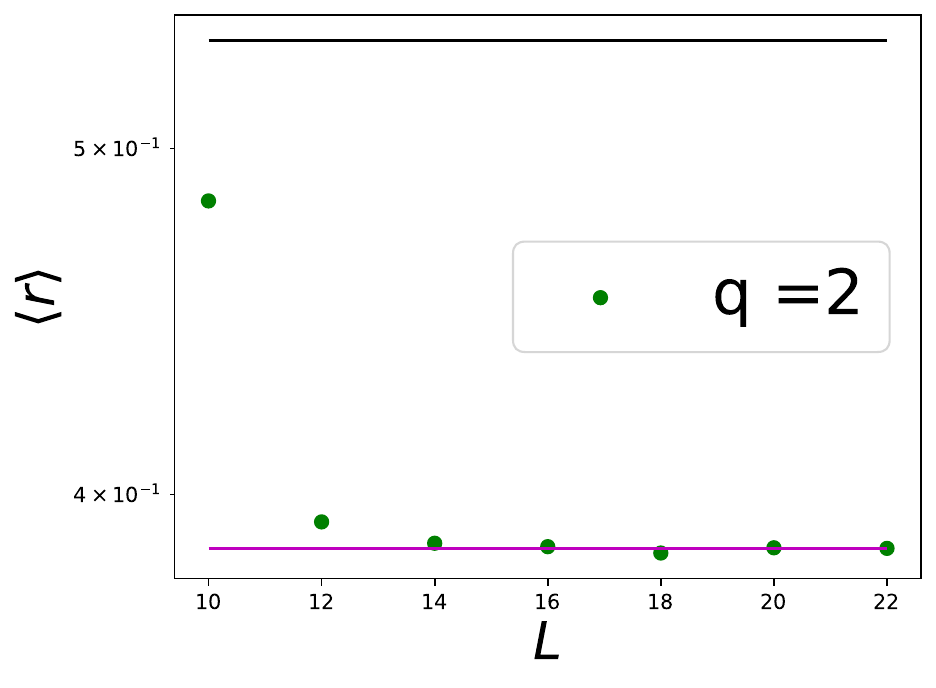}
\caption{ Plotted data for the convergence of $\langle r \rangle$ to RMT predictions.  In black we plot the GOE predictions and in purple we plot the corresponding Poisson prediction. Left: We see the $q=1$ data converge to the GOE prediction as a function of system size. Right: We see the $q=2$ data.   }
\label{fig:ratioconverg}
\end{figure}
 In appendix \ref{sec:RMTq23} we repeat our numerical studies but for random matrices, showing that our results from a quantum chaotic Hamiltonian agree with the results of RMT. Importantly our tests here are local tests on the spectrum. It is an open question if the symmetry resolved Hamiltonian $\hat{H}_2$ will still obey Poisson statistics for more complex tests such as investigating the spectral form factor \cite{bertini2021random,chan2018}.The non-symmetry resolved Hamiltonian corresponding to the $q$ no-resonance condition will not demonstrate Poisson statistics. This follows from the $q = 2$ case corresponding to the second moment of the spectral form factor for the GOE, which has a quadratic ramp \cite{mehta2004random}. We leave investigation of the spectral form factor to future work.  An important observation from this work is that when studying $q>1$ cases are never found where the spectral statistics are not Poisson. We conjecture that for $q>4$ the Poisson statistics persist, marking a stark contrast to the single particle $q$ no-resonance condition where, for typically studied models the condition is robustly violated \cite{riddell2024generic}.

We emphasize that the presence of Poisson statistics does not imply violations of the $q>1$ no-resonance condition. It does, however, imply the gaps in the spectrum of $\hat{H}_2$ cluster close to zero. If we investigate the probability of finding a gap within the range $0<s<\epsilon$, where $\epsilon$ is small, we have for the GOE, 
\begin{equation}
    \int_0^\epsilon \frac{\pi s}{2}e^{-\pi^{2} s^{2}/4} ds = \frac{1-e^{-\frac{-\pi^2 \epsilon^2}{4}}}{\pi} \approx \frac{\pi \epsilon^2}{4} - \frac{\pi^3 \epsilon^4}{32} \dots,
\end{equation}
so we see that the probability is proportional to $\epsilon^2$ for small gaps. On the contrary for the Poisson distribution one intuitively yields something much larger, 

\begin{equation}
     \int_0^\epsilon e^{-s} ds = \sinh \epsilon - \cosh \epsilon +1 \approx \epsilon - \frac{\epsilon^2}{2} \dots,
\end{equation}
giving us only linear scaling for small gaps. While both probabilities are of course small, the GOE is significantly smaller, giving one a significantly stronger case to assume definition \ref{def:genericspec} is satisfied in one's choice of chaotic model, including an absence of approximate resonances. In the case of Poisson statistics, one might expect to find one or many gaps that are essentially zero.

Infinite time averages are theoretical tools for which we average over times significantly longer than the Heisenberg time $\tau_H   \sim e^S$, where $S$ is the thermodynamic entropy at the appropriate energy $E$ \cite{Srednicki99}. The presence of essentially zero gaps will lead to terms $e^{i(E_k-E_{k-1})t}$ which are stationary on time scales proportional to $\tau_H$. Despite the presence of such violators, we expect the set of problematic gaps to be small relative to the total Hilbert space dimension. Since it is likely that some violations or cases that are indistinguishable from violations of definition \ref{def:genericspec} are inevitable, especially for cases using $q>1$, it is instructive to revisit past results keeping in mind a small number of violations (or approximate violations) will most likely be present. 
Below we discuss modifying key results in the field of quantum equilibration theory to accommodate the presence of violations of definition \ref{def:genericspec}.

		\section{Equilibration and recurrence} 
		\subsection{Physical models}
	    In this section we approach the problem of equilibration in light of our investigation of the higher order no-resonance conditions and the presence of Poisson statistics. First, let us review a basic setup. Consider a time independent system with the Hamiltonian $\hat{H}$ where we label the energy eigenbasis as $\hat{H}|E_k\rangle = E_k |E_k\rangle$. For simplicity, we take the spectrum of $\hat{H}$ to be discrete and finite. We will initialize our system in some pure state 
	    \begin{equation}
	        |\psi (t=0)\rangle = \sum_{m} c_m |E_k\rangle.
	    \end{equation}
	    To track equilibration, we study properties of the expectation value of an observable $\hat{A}$. This observable is general, but we demand that the largest single value $||A||$ is independent of system size or saturates to a finite value in the thermodynamic limit. In what follows we will assume our spectrum has level-repulsion, so that we may safely assume 
	    \begin{equation}
	        E_m = E_l \implies m = l. \end{equation}
	    
	    If our observable equilibrates, its finite time value $\langle \hat{A}(t) \rangle = \langle \psi(t)| \hat{A} | \psi(t) \rangle$ must relax to its infinite time average value, i.e. 
	    \begin{equation}
	        \Bar{A} = \lim_{T \to \infty} \frac{1}{T} \int_0^T\langle \hat{A}(t) \rangle dt =\lim_{T \to \infty} \frac{1}{T} \int_0^T \sum_{m,n}
	   \bar{c}_m c_n A_{m,n} e^{i(E_m-E_n)t} dt = \sum_m |c_m|^2 A_{m,m}. \end{equation}
	    $\bar{A}$ is usually written in terms of the diagonal ensemble $\omega = \sum_m |c_m|^2 |E_m\rangle \langle E_m|$ as $\bar{A} = \tr \left( \omega \hat{A}\right)$.
	  A typical quantity to study in quantum equilibration would be the variance of the expectation value around $\bar{A}$. This was studied and bounded in \cite{short2011equilibration,reimann2012equilibration},  assuming that the $q=2$ no-resonance condition was satisfied. The variance is written as 
	  
	  \begin{equation}
	      \mu_2 = \lim_{T \to \infty} \frac{1}{T} \int_0^T \left( \langle\hat{A}(t) \rangle - \bar{A} \right)^2 dt. 
	  \end{equation}
	  It was famously found in \cite{short2011equilibration} that this variance can be bounded by the purity of the diagonal ensemble 
	  \begin{equation} \label{eq:shortbound}
	      \mu_2 \leq ||A||^2 \tr \left( \omega^2 \right) .
	  \end{equation}
	  Note equation \ref{eq:shortbound} holds as a consequence of the $q=2$ no-resonance condition. The purity of the diagonal ensemble usually decays exponentially fast with respect to the system size (see for example Fig. 2 in \cite{Riddell2022}). If one assumes higher order $q$ no-resonance conditions, it was recently found that, for higher moments of equilibration, 
	  \begin{equation} \label{eq:kappa}
	      \mu_q = \lim_{T \to \infty} \frac{1}{T} \int_0^T \left( \langle\hat{A}(t) \rangle - \bar{A} \right)^qdt = \lim_{T\to \infty} \frac{1}{T} \int_0^T  \sum_{m_{1}\neq n_{1},...,m_{q}\neq n_{q}} \prod_{i=1}^{q}\left(A_{m_{i},n_{i}}\bar{c}_{m_{i}} c_{n_{i}}\right)e^{i(E_{m_{i}} -E_{n_{i}})t}dt,
	  \end{equation}
	  a similar bound can be found \cite{Riddell2022}, 
	  \begin{equation}
	      |\mu_q| \leq \left(q ||A|| \sqrt{\tr \left( \omega^2 \right)} \right)^q.
	  \end{equation}

	  In light of section \ref{Sec:SpecStates} and the presence of Poisson statistics for higher order $q$, these results should be updated to reflect the high probability of there being a violation of the $q$ no-resonance condition. 
	  \begin{theorem}
	  Suppose we have a model that has violations of the $q$ no-resonance condition. Then the moments $\mu_q$ can be bounded as
	  \begin{equation}
	      |\mu_q| \leq ||A||^q \left( q^q + \frac{\mathcal{N}_{q,L}}{2q}\right)\sqrt{\tr \left( \omega^2 \right)}^q,
	  \end{equation}
	  where $\mathcal{N}_{q,L}$ is the maximum number of times one $E_m$ appears in violations of the $q$ no-resonance condition for a given system size $L$. We call the $E_m$'s that appear in more than one violation of the resonance condition exceptional violators. 
	  \end{theorem}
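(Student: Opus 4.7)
The plan is to extend the proof in \cite{Riddell2022} by carefully splitting the surviving terms into those that come from permutations (which already satisfy the old bound) and those coming from genuine violations of the $q$ no-resonance condition. First I would expand the $q$-th moment directly:
\begin{equation}
\mu_q = \sum_{\substack{m_1\neq n_1\\ \vdots \\ m_q\neq n_q}} \prod_{i=1}^q \bar c_{m_i} c_{n_i} A_{m_i n_i} \cdot \lim_{T\to\infty}\frac{1}{T}\int_0^T e^{it \sum_i (E_{m_i}-E_{n_i})} dt,
\end{equation}
so that only tuples obeying $\sum_i E_{m_i}=\sum_i E_{n_i}$ survive. I would then partition the surviving tuples into $S_{\text{perm}}$, where $\{m_i\}=\{n_i\}$ as multisets, and $S_{\text{exc}}$, the exceptional violators where the multisets differ yet the energies coincide.

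For $S_{\text{perm}}$, I would reuse the argument from \cite{Riddell2022} verbatim: the multiset identity lets one pair each $m_i$ with some $n_{\sigma(i)}$ and apply $|A_{mn}|\leq \|A\|$ together with Cauchy-Schwarz on the $|c_m|^2$ weights, yielding $(q\|A\|)^q \tr(\omega^2)^{q/2} = (q\|A\|\sqrt{\tr(\omega^2)})^q$ after summing over the $q!\leq q^q$ permutations. This gives the $q^q$ term in the stated bound.

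For $S_{\text{exc}}$ I would bound
\begin{equation}
\Bigl| \sum_{S_{\text{exc}}} \prod_i \bar c_{m_i} c_{n_i} A_{m_i n_i} \Bigr| \leq \|A\|^q \sum_{S_{\text{exc}}} \prod_i |c_{m_i}||c_{n_i}|,
\end{equation}
and then convert the remaining sum into purities using the structure of $\mathcal{N}_{q,L}$. The idea is that every exceptional tuple pins $2q$ indices, but each fixed energy $E_m$ can be part of at most $\mathcal{N}_{q,L}$ distinct violating tuples by definition; this turns the counting into one where each $|c_m|^2$ factor can appear with multiplicity at most $\mathcal{N}_{q,L}$. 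Applying the generalised Hölder/Cauchy-Schwarz rearrangement to redistribute the $|c_{m_i}|$ factors gives a factor of $\tr(\omega^2)^{q/2}$ from the $c$'s and a combinatorial prefactor of order $\mathcal{N}_{q,L}$ divided by the $2q$ ways of relabelling endpoints of a single pair contribution, which yields the $\mathcal{N}_{q,L}/(2q)$ coefficient. Combining the permutation and exceptional estimates and pulling out the common $\|A\|^q \sqrt{\tr(\omega^2)}^q$ then gives the theorem.

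The main obstacle is this last bookkeeping step: translating the quantity $\mathcal{N}_{q,L}$, which counts \emph{occurrences of a single $E_m$} across violating tuples, into a clean bound on $\sum_{S_{\text{exc}}} \prod_i |c_{m_i}||c_{n_i}|$ in terms of $\tr(\omega^2)^{q/2}$. One must avoid double-counting across the $2q$ slots in the tuple and make sure the Cauchy-Schwarz application is tight enough to absorb the symmetry factor, which is precisely what produces the $1/(2q)$ in the stated coefficient. Everything else reduces to bounds already established in the literature.
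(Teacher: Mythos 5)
Your decomposition into permutation-type tuples and genuine violations matches the paper's, as does the $\|A\|^q$ peeling step, so the skeleton of the argument is right. But you have correctly identified that the crux is converting $\sum_{\Lambda_q\in\mathcal{S}}\prod_{j=1}^q|c_{m_j}||c_{n_j}|$ into a purity-type quantity, and that conversion is precisely what you leave unresolved. The step you are missing is the inequality of arithmetic and geometric means: applied to the $2q$ nonnegative factors $|c_{m_1}|,|c_{n_1}|,\dots,|c_{m_q}|,|c_{n_q}|$, it gives
\begin{equation}
\prod_{j=1}^q |c_{m_j}||c_{n_j}| \le \frac{1}{2q}\sum_{j=1}^q\left(|c_{m_j}|^{2q}+|c_{n_j}|^{2q}\right),
\end{equation}
which linearizes the product so that the counting by $\mathcal{N}_{q,L}$ can be applied term by term. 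Summing over $\mathcal{S}$, each index $m$ contributes $|c_m|^{2q}$ at most $\mathcal{N}_{q,L}$ times, which yields $\frac{\mathcal{N}_{q,L}}{2q}\sum_m|c_m|^{2q}=\frac{\mathcal{N}_{q,L}}{2q}\tr(\omega^q)$, and then $\tr(\omega^q)\le\tr(\omega^2)^{q/2}$ finishes it.

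Your stated intuition that the factor $1/(2q)$ arises from ``the $2q$ ways of relabelling endpoints'' of a tuple is not correct: there is no symmetrization or relabelling argument here. The $2q$ is simply the normalization in the arithmetic mean over the $2q$ factors of the product. Likewise, a direct Hölder or Cauchy--Schwarz rearrangement over the constrained violation set does not cleanly produce $\tr(\omega^2)^{q/2}$ with a multiplicative prefactor of $\mathcal{N}_{q,L}$, because those inequalities couple the $c$ factors across tuples rather than isolating single-index sums; AM-GM (or an equivalent power-mean inequality) is needed to decouple them. Once you substitute this step, the rest of your sketch closes and reproduces the theorem.
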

	  \begin{proof}
	  Terms that contribute to $\mu_q$ are sums of energies that are equal. Let $\Lambda_q$ and $\Lambda_q'$ be sets of indices corresponding to particular energies 
	  
	  \begin{equation}
	      \sum_{m\in\Lambda_q} E_m = \sum_{m\in\Lambda_q'} E_m.
	  \end{equation}
	  The no-resonance condition picks out the trivial set of energies that satisfy this equality, which is when $\Lambda_q = \Lambda_q'$. These contributions were bounded in \cite{short2011equilibration,Riddell2022}. We collect the remaining violations in a set $\mathcal{S}$ and write, 
	  \begin{equation}
	      |\mu_q| \leq \left(q ||A|| \sqrt{\tr \left( \omega^2 \right)} \right)^q +\left| \sum_{\Lambda_q \in \mathcal{S}} \prod_{j=1}^q \bar{c}_{m_j} c_{n_j} A_{m_j,n_j} \right|,
	  \end{equation}
	  where we have identified $\Lambda_q \in \mathcal{S} = \{ m_j,n_j \}$. The second term can be bounded as follows. 
	  \begin{equation}
	      \left|\sum_{\Lambda_q \in \mathcal{S}} \prod_{j=1}^q \bar{c}_{m_j} c_{n_j} A_{m_j,n_j} \right| \leq ||A||^q \sum_{\Lambda_q \in \mathcal{S}} \prod_{j=1}^q |c_{m_j}|| c_{n_j}|.
	  \end{equation}
	  Since all $|c_{m_j}|$ are positive, we may use the inequality of arithmetic and geometric means, giving
	  \begin{equation}
	      \leq \frac{||A||^q}{2q} \sum_{\Lambda_q \in \mathcal{S}} \sum_{j=1}^q \left( |c_{m_j}|^{2q} + |c_{n_j}|^{2q} \right).
	  \end{equation}
	  We know that $\tr\left( \omega^q \right) = \sum_m |c_m|^{2q}$.  Define $\mathcal{N}_{q,L} = \max_{E_{m} } | \{ X \in \mathcal{S} | E_{m} \in X\}|$. Then it follows that an individual $ |c_{m_j}|^{2q}$ contributes at most $\mathcal{N}_{q,L}$ times, we have that
	  \begin{equation}
	      \leq \frac{||A||^q \mathcal{N}_{q,L}}{2q} \tr \left( \omega^q \right),
	  \end{equation}
   where $N_{q,L} = \max_{}$
	  We lastly recall that $\tr\left( \omega^q \right) \leq \tr\left( \omega^2 \right)^{q/2} $, which completes the proof. 
	  \end{proof}

	  Accommodating the presence of degenerate gaps for the $q=2$ case has been considered before in \cite{ShortFarrelly11}.  Our bound reads 
	  
	  \begin{equation}
	      |\mu_2| \leq ||A||^2 \left( 4+\frac{\mathcal{N}_{2,L}}{4}\right) \tr\left( \omega^2 \right).
	  \end{equation}
	 Alternatively, one can likewise write \cite{ShortFarrelly11} as 
	  
	  \begin{equation}
	      |\mu_2| \leq N(\epsilon) ||A||^2\tr\left( \omega^2 \right),
	  \end{equation}
	  where $N(\epsilon)$ is the maximum number of energy gaps in any interval for $\epsilon > 0$, i.e.
	  
	 \begin{equation}
	     N(\epsilon) = \max_E | \{ (k,l) | \enspace  E_k - E_l \in [E,E+\epsilon) \}|.
	 \end{equation}
	  One can recover the maximum degeneracy of the gaps by considering $\lim_{\epsilon \to 0^+} N(\epsilon)$. In the limit of non-degenerate gaps these bounds are identical, and only differ by a constant factor for a small number of degeneracies in the gaps. Our result might in theory give better constant factors than the result in \cite{ShortFarrelly11}, however $N(\epsilon)$ is likely a more intuitive quantity and easier to work with numerically.

	  We next wish to understand the properties of $\mathcal{N}_{q,L}$, which in practice is challenging to study numerically. The worst scaling it could have is the total number of violations, i.e. $0\leq\mathcal{N}_{q,L} \leq |\mathcal{S}|$. As we have noted earlier, the presence of Poisson statistics does not imply $|\mathcal{S}| >0$. 
	  An easy property to understand is that if $\mathcal{N}_{q,L}\geq 2$ this implies at the very least that $\mathcal{N}_{q+1,L} \geq 1$. To see this consider $q = 2$ for an exceptional violator $E_m$ that appears at least twice. We might have $E_m$ as an exceptional violator as 
	  \begin{equation}
	      E_m + E_n = E_p +E_l, \enspace E_m + E_k = E_r + E_h.
	  \end{equation}
	    For $q = 3$ a violation of the no-resonance condition is 
	    \begin{equation}
	        E_p + E_l + E_k = E_r + E_h +E_n. 
	    \end{equation}
		Despite two exceptional violations for $q = 2$ implying at least one for $q = 3$, this does not imply $\mathcal{N}_{q,L}$ is decreasing in $q$. To estimate the size of $\mathcal{N}_{q,L}$ we can attempt to quantify the expected or average behavior of the quantity. First, let us assume we randomly generated the set $S$. We will assume that the indices which appear are uniformly generated, so each element of $S$ can be understood to be a tuple of $2q$ indices, $(m_1, \dots m_{2q})$. Despite this construction the indices aren't statistically independent due to restrictions on acceptable tuples in $S$. For example the indices in the tuple cannot be equal to each other under our assumptions. Despite this, in the large $L$ limit this dependence cannot affect results due to the smallness of $q$ and the corresponding exponential nature of the number of possible indices $2^L$. We can therefore focus on the average behavior of the first index of each tuple $m_1$. Our goal will be to predict the average number of times the first index $m_1$ ends up being the same index for different tuples in $S$. The same index can appear in different tuples at most $|S|$ times, and thus we wish to compute 
\begin{equation}
     \langle \mathcal{N}_{q,L} \rangle= \sum_{n=1}^{|S|} n p(n),
\end{equation}
where $p(n)$ is the probability of the same index appearing $n$ times. The total number of configurations possible for the first index of each tuple in $S$ is $2^{|S|L}$ and therefore we must simply count the number of configurations where $n$ copies of the same $m_1$ appear. This is given by 
\begin{equation}
    {|S|\choose n} 2^{L \left( |S|-n \right)},
\end{equation}
which gives the following formula for our expected value 
\begin{equation}
     \langle \mathcal{N}_{q,L} \rangle= \sum_{n=0}^{|S|} \frac{n {|S|\choose n}}{2^{Ln}} = \frac{|S|}{2^{L}} (2^{-L}+1)^{|S|-1}.
\end{equation}

We now have some special limiting cases to consider. Suppose that $|S| \propto c \,2^{L}$ for some constant $c$. Then the expected value of $\langle \mathcal{N}_{q,L} \rangle$ is $c\,e^{c}$ as $L$ goes to infinity. However, if $|S|$ has sub exponential growth, for example if it scales as $L$, then the expected value goes to zero for large system size as $\mathcal{O}(L/2^{L})=  \mathcal{O}(|S|/2^{L})$. Therefore we expect that for most cases, even with modest violations of the no-resonance condition we expect $\lim_{L \to \infty} N_{q,L}$ to be finite and quite small. This conclusion is further amplified by the findings in the previous section where we found the level spacing statistics of sums of energies to result in Poisson statistics. These observations combined imply that for the vast majority of cases the corrections due to resonances should be small in the case studied here. It would be interesting to investigate other contexts as this might not be the case.



		\subsection{A Random Matrix Theory Approach}
  
    In this section we will show how one could compute $\mu_{q}$ for the GUE and GOE with an unfolded spectrum in the large $N$ limit. We can rewrite equation \ref{eq:kappa} for finite $T$ as 
   \begin{equation}
     \mu_{q}(T,N)  = \sum_{i_{1}\not = j_{1},...,i_{q}\not = j_{q} } \left( \prod_{k=1}^{q}\overline{c}_{i_{k}}c_{j_{k}} \langle i_{k}|A|j_{k}\rangle \right)   e^{i\sum_{k=1}^{q}(\lambda_{i_{k}}-\lambda_{j_{k}})t }  , 
   \end{equation}
   where the eigenvalues are unfolded and from some matrix ensemble. Define the $n$-level  form factor of a matrix ensemble as 
\begin{equation}
      \mathcal{R}_{n} (t)= \mathbb{E}\left[  \frac{1}{N^{2n}} e^{i t\sum_{k=1}^{n}\lambda_{i_{k}}-\lambda_{j_{k}}}\right],
\end{equation}
and the connected component of the $n$-level form factor $\mathcal{K}_{n}$ with the moment-cumulant relations
\begin{equation}\label{eq:moment-cumulant}
    \mathcal{R}_{n}(t) =\sum_{\pi \vdash [n]} (-1)^{n-|\pi|} \prod_{B \in \pi} \mathcal{K}_{|B|}(t)
\end{equation}
where the sum is over partitions of the set $[n]:=\{1,2,...,n\}$, $|\pi|$ denotes the number of blocks in the partition, the product is over blocks of the partition, and $|B|$ denotes the number of element of a block $B$. See chapter 6 of \cite{mehta1960statistical} for details.
 In the following theorem we apply classical results on the $n$-level form factors from \cite{mehta2004random} to study Hamiltonians with no time-reversal symmetry \cite{mehta1960statistical,bohigas1995chaotic}. Note that these are not exactly spectral form factors, but rather the expectation value of their summand. Also note that the $t$ in above (connected) $n$-level  form factors is in units of Heisenberg time $t_{H} = 2 \pi \overline{\rho}$, where $\overline{\rho}$ is the average of the spectral density function. Thus, we will need to re-scale to physical time before taking our time average.  
\begin{theorem}
  Consider a model with a GUE unfolded spectrum. Then 
the infinite time average of 
of the expected value of $\mu_{2}(T,N)$  goes to zero as $1/T^2$ as $T$ and $N$ go to infinity at the same rate.
\end{theorem}
\begin{proof}
Consider 
 \begin{align}
    \lim_{T \rightarrow \infty}\frac{1}{T}\int_{0}^{T}\lim_{N \rightarrow \infty}\mathbb{E}\left[\frac{1}{N^4} \mu_{2}(T,N)\right] dt&=\lim_{T \rightarrow \infty}\frac{1}{T}\int_{0}^{T}\lim_{N \rightarrow \infty}\mathbb{E}\left[\frac{1}{N^{4}}\sum_{i_{1}\not =j_{1},i_{2}\not = j_{2}} \left( \prod_{k=1}^{q}\overline{c}_{i_{k}}c_{j_{k}} \langle i_{k}|A|j_{k}\rangle \right) e^{i\sum_{k=1}^{2}(\lambda_{i_{k}}-\lambda_{j_{k}})t }  \right ] dt\\
     & =\sum_{i_{1}\not =j_{1},i_{2}\not = j_{2}} \left( \prod_{k=1}^{2}\overline{c}_{i_{k}}c_{j_{k}} \langle i_{k}|A|j_{k}\rangle \right) \frac{1}{T}\int_{0}^{T}\lim_{N \rightarrow \infty}\frac{1}{N^{4}}\mathbb{E}\left[e^{i\sum_{k=1}^{2}(\lambda_{i_{k}}-\lambda_{j_{k}})t }  \right ] dt\\
\end{align}
Note that we can pull the expectation value inside because the terms in our product dependent only on the eigenvectors of the GUE, which are independent of its eigenvalues. This follows from the fact that the joint probability distribution $P(H)$ for the entries of the GUE is invariant under the action of the unitary group. For more details see the proof of Weyl's integration formula \cite{deift2000orthogonal}. Note that the $1/N^4$ factor is the proper normalization for the GUE expectation value of such an observable. We can split this sum into four sums, up to the symmetry of the summand, where either 1) $i_{1} =i_{2}$ and $j_{1} =j_{2}$, 2) $i_{1} \not=i_{2}$ and $j_{1} =j_{2}$, 3) $i_{1} =i_{2}$ and $j_{1} \not=j_{2}$, or 4) $i_{1} \not=i_{2}$ and $j_{1} \not=j_{2}$. In case 1) each term is of the form
\begin{align}
    \lim_{N \rightarrow \infty}\mathbb{E}\left[e^{2i(\lambda_{1}-\lambda_{2})t }  \right ] &=  \lim_{N \rightarrow \infty}\mathbb{E}\left[e^{2i\pi (\lambda_{1}-\lambda_{2})\left(\frac{2 t}{\pi}\right) }  \right ] =\mathcal{K}_{1}(2 t/\pi)^2 -\mathcal{K}_{2}(2 t/\pi)\approx \frac{1}{N} - \frac{1}{N}\begin{cases} 1
         - \left |\frac{2 t}{\pi} \right| & |t| \leq \frac{\pi}{2}\\
        0 & |t| \geq \frac{\pi}{2}.
    \end{cases}
\end{align}
which is just a rescaling of the usual spectral form factor for the GUE \cite{cipolloni2023spectral}. 

In case 2) each term is of the form 
    \begin{equation}
    \lim_{N \rightarrow \infty}\mathbb{E}\left[e^{i(\lambda_{1}+\lambda_{2}-2\lambda_{3})t }  \right ].
\end{equation}  In case 3) each term is of the form
    \begin{equation}
    \lim_{N \rightarrow \infty}\mathbb{E}\left[e^{i(2\lambda_{1}-\lambda_{2}-\lambda_{3})t }  \right ].
\end{equation}
Lastly, in  case 4) each term is of the form
    \begin{equation}
    \lim_{N \rightarrow \infty}\mathbb{E}\left[e^{i(\lambda_{1}+\lambda_{2}-\lambda_{3}-\lambda_{4})t }  \right ].
\end{equation}
Note that evaluating all cases is possible but unnecessary since only the contribution from case 4) contribute in the large $N$ limit. This follows from noting that there will be $N(N-1)$ identical many integrals for case 1), $N(N-1)(N-2)$ many for cases 2) and 3), and $N(N-1)(N-2)(N-3)$ for case 4). Hence, only terms in case 4) will contribute in the limit.

In what follows we will use formulae 6.2.18 and 6.2.19 from \cite{mehta2004random} to evaluate all $\mathcal{K}$.
First we must apply the moment-cumulant formula \ref{eq:moment-cumulant} to write
\begin{align}
    \lim_{N \rightarrow \infty}\mathbb{E}\left[e^{i(x+y-z-w)t }  \right ] &= -\mathcal{K}_{4}( t/\pi) + 4 \mathcal{K}_{3}( t/\pi) \mathcal{K}_{1}( t/\pi) - 3 \mathcal{K}_{2}( t/\pi)\mathcal{K}_{1}( t/\pi)^2 + \mathcal{K}_{1}^{4} (t/\pi)\\
    &\approx-\mathcal{K}_{4}(t/\pi) + \frac{4}{\sqrt{N}} \mathcal{K}_{3}( t/\pi)  - \frac{3}{N} \mathcal{K}_{2}( t/\pi) + \frac{1}{N^2}.
\end{align}

Now define the function 
  \begin{equation}
       f(x) =\begin{cases}
           1 & |x| \leq \frac{1}{2} \\
           0 & \text{otherwise}.
       \end{cases}
   \end{equation}
   then we can evaluate
\begin{align}
    K_{3}( 2 t/\pi ) &\approx \frac{2}{N^{3/2}} \int_{\infty}^{\infty}  f( 2\pi x)f( 2\pi (x+t)) f ( 2 \pi( x + 2 t) )dx\\
    &=  \frac{2}{N^{3/2}} \begin{cases}
           1- \frac{3}{\pi}|t| & 0 \leq t \leq \frac{\pi}{ 3} \\
           0 & |t|\geq   \frac{\pi}{ 3},
       \end{cases}
  \
\end{align}
and
\begin{align}
    K_{4}( 2 t/\pi ) &\approx \frac{6}{N^{2}} \int_{\infty}^{\infty} f( 2\pi x)f( 2\pi (x+t)) f ( 2 \pi( x + 2 t)) f(2 \pi (x + 3t))dx\\
    &= \frac{6}{N^{2}} \begin{cases}
           1-\frac{4}{\pi}| t | & 0 \leq t \leq \frac{\pi}{ 4} \\
           0 & |t|\geq   \frac{\pi }{ 4}.
       \end{cases}
\end{align}

Now as mentioned before, we need to convert our $t$ from units of Hesienberg time to physical time. For the GUE our Heisenberg time is $t_{H} = 2 \pi \overline{\rho} = 2 \pi (N/4) = \frac{\pi}{2N}$. Putting all this together we have that in physical time units

\begin{align}
   & \lim_{N \rightarrow \infty}\mathbb{E}\left[e^{i(x+y-z-w)t }  \right ] 
    \approx\\
    &\frac{1}{N^2}\left(-6 \begin{cases}
           1-\frac{8}{\pi^2 N}| t | & 0 \leq t \leq \frac{N\pi^2}{ 8} \\
           0 & |t|\geq   \frac{N\pi^2}{ 8} ,
       \end{cases}+ 8\begin{cases}
           1- \frac{6}{\pi^2 N}|t| & 0 \leq t \leq \frac{N\pi^2}{ 6} \\
           0 & |t|\geq   \frac{N\pi^2}{ 6},
       \end{cases}- 3\begin{cases}
        1 -    \frac{4}{\pi^2 N} \left |t\right| & |t| \leq \frac{\pi^2 N}{4}\\
        0 & |t| \geq \frac{\pi^2 N}{4}.
    \end{cases} + 1\right)\\
\end{align}

Now for any large $T$ such that $T> \frac{N \pi^2}{2}$, we have that the above quantity has a time average of 
\begin{equation}
  \approx \frac{1}{T}\left(-\frac{6\pi^2}{16} \frac{1}{N} + \frac{3 \pi^2}{4} \frac{1}{N} - \frac{3\pi^2}{8}\frac{1}{N}+\frac{T}{N^2}\right).
\end{equation}
The result then follows.
\end{proof}

 The idea of this proof could be applied to higher moments as well as other invariant ensembles. Usually the GUE and GOE are of interest, but progress has been made studying the spectral form factor for other matrix ensembles. For example see \cite{forrester2021differential,forrester2021quantifying,cipolloni2023spectral}. The $q$-level spectral form factor can be computed explicitly, but it is a computationally heavy task. For reference see \cite{liu2018spectral}, where it is computed but for ensembles that are not unfolded.

  As we demonstrate in appendix \ref{sec:RMTq23}, the spectrum of the random matrix Hamiltonian likewise experiences Poisson statistics for $q\geq 2$. However, despite the presence of Poisson statistics, the above RMT result indicates that we still should expect $\mu_q \to 0$ indicating equilibration of our observable.

\section{Conclusion}

In this work we have explored spectral statistics of chaotic Hamiltonians, namely the statistics surrounding sums of energies. We found that despite being chaotic, sums of energies displayed Poisson statistics instead of Wigner-Dyson statistics. This was demonstrated numerically for both a chaotic spin Hamiltonian and the GOE. The presence of Poisson statistics leads one to believe that accounting for potential degeneracies or ``resonances" in infinite time averages of some dynamical quantities is an interesting follow-up challenge to previous work. We applied this observation to the theory of equilibration where we generalized known bounds to accommodate for degeneracies. Assuming the number of degeneracies is not exponentially large in system size, we demonstrated that the the bounds can be easily generalized to accommodate the presence of resonances. We further used techniques from RMT to prove that for the GUE moments of equilibration go to zero in the thermodynamic limit.

\section{Acknowledgements}
J.R. would like to thank Bruno Bertini, Marcos Rigol and Alvaro Alhambra for fruitful conversations. J.R. would like to extend special thanks in particular to Bruno who gave valuable feedback at various stages of the project. J.R. acknowledges the support of Royal Society through the University Research Fellowship No. 201101. N.P. acknowledges support from the Natural Sciences and Engineering Research Council of Canada (NSERC). 

\subsection*{Data availability} The code to reproduce the the figures of this article will be shared upon request.

\subsection*{Conflict of interest}The authors have no competing interests to declare that pertain to the content of this article.

		\bibliography{references2}

  \onecolumngrid

\appendix

\setcounter{lemma}{0}
\setcounter{theorem}{0}

\section{RMT predictions for $q=2,3$} \label{sec:RMTq23}

In this section we demonstrate that random matrix models have Poisson statistics for $q =  2,3$.  We accomplish this by simply looking at the ratio test outlined in the main text. First let us define a random matrix Hamiltonian, 

\begin{equation} \label{eq:RMTHAM}
    \hat{H} = A + A^T, 
\end{equation}
where $A$ is a matrix filled with random numbers generated from a normal distribution with zero mean and unit variance. We label the side length of $A$ as $N$. Similar to the physical Hamiltonian, we can study the $q = 2$ case by first constructing the Hamiltonian
\begin{equation} 
		    \hat{H}_2 = \hat{H} \otimes \mathbb{I} + \mathbb{I} \otimes \hat{H},
		\end{equation}

  with the spectrum $\Lambda_{k,l}$. Again this spectrum is symmetric under permutations of the indices, so we resolve this symmetry and only treat eigenvalues with unique $(k,l)$ such that $l > k$. We investigate the spectral properties of this Hamiltonian after resolving our symmetry in figure \ref{fig:RMTq23} in the left panel. Here we clearly see agreement with a Poisson distribution. The random matrix model experiences Poisson statistics.

    \begin{figure}[h!]
    \centering
    \includegraphics[width=0.45\linewidth]{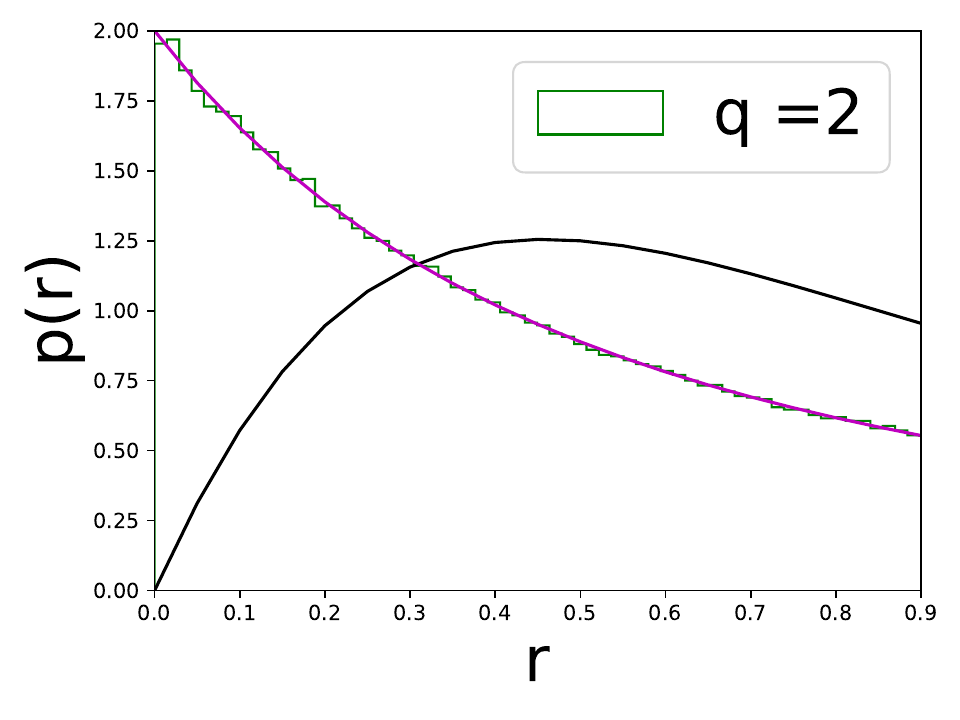}
\includegraphics[width=0.45\linewidth]{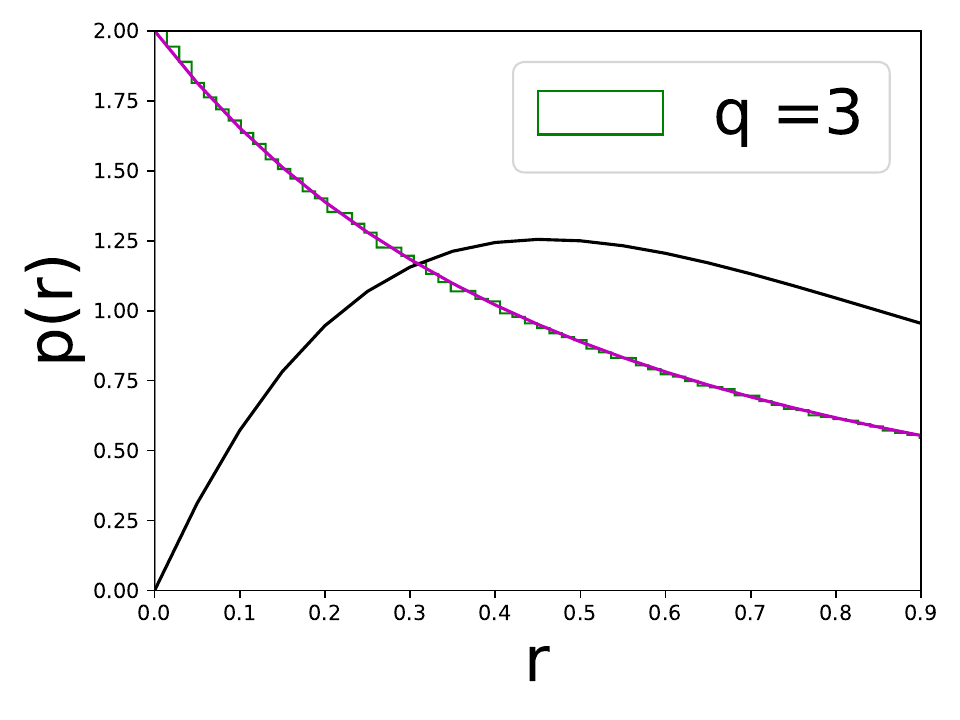}
\caption{ (a) Ratio test for the $q = 2 $ symmetry resolved random matrix Hamiltonian. Side length is $N = 1200$. (b) Ratio test for the $q = 3$ symmetry resolved random matrix Hamiltonian. Side length is $N = 200$.     }
\label{fig:RMTq23}
\end{figure}
The construction of a new Hamiltonian for $q = 3$ is similar. We have 

\begin{equation}
    \hat{H}_3 = \hat{H} \otimes \mathbb{I} \otimes \mathbb{I}   + \mathbb{I} \otimes \hat{H} \otimes \mathbb{I}  + \mathbb{I}  \otimes \mathbb{I}  \otimes \hat{H}.
\end{equation}
This gives us a new spectrum of $\Lambda_{k,l,q} = E_k + E_l + E_q$, where this new spectrum is also invariant under permutations of its indices. We resolve this symmetry by considering terms such that $q>l>k$. The result of the ratio test on this new spectrum is given in the right panel of Fig. \ref{fig:RMTq23}, indicating again Poisson statistics. This is similarly found for higher values of $q$, which leads us to conjecture that this will be true for all $q\geq2$.

\section{Physical Hamiltonian q = 3,4 spectral statistics}\label{sec:q34}

In this section we provide numerical evidence of Poisson statistics for $q = 3,4$ in the physical Hamiltonian. We repeat the $q=3$ case as was covered in the RMT appendix, and also investigate the $q=4$ statistics. Both cases will be covered with the ratio test, and we will use the same physical model as the main text, where we resolve all relevant symmetries. For the $q = 4$ case, we must work with the Hamiltonian, 

\begin{equation}
    \hat{H}_4 = \hat{H} \otimes \mathbb{I} \otimes \mathbb{I} \otimes \mathbb{I}    + \mathbb{I} \otimes \hat{H} \otimes \mathbb{I} \otimes \mathbb{I}   + \mathbb{I} \otimes \mathbb{I}  \otimes \hat{H} \otimes \mathbb{I}  + \mathbb{I} \otimes \mathbb{I} \otimes \mathbb{I}  \otimes\hat{H}.
\end{equation}

This gives us a new spectrum, again which is invariant under index permutations. We can resolve this symmetry with an identical strategy to the $q=2,3$ cases, and study the corresponding symmetry-resolved spectrum. The results of this for $q=3,4$ in the physical Hamiltonian are given in Fig. \ref{fig:q34}. These results again indicate that the spectrum  obeys Poisson statistics. The left panel in Fig. \ref{fig:q34} also serves as evidence that the statistics of the Hamiltonian agrees with RMT as seen in the right panel of Fig. \ref{fig:RMTq23}.

   \begin{figure}[h!]
    \centering
    \includegraphics[width=0.45\linewidth]{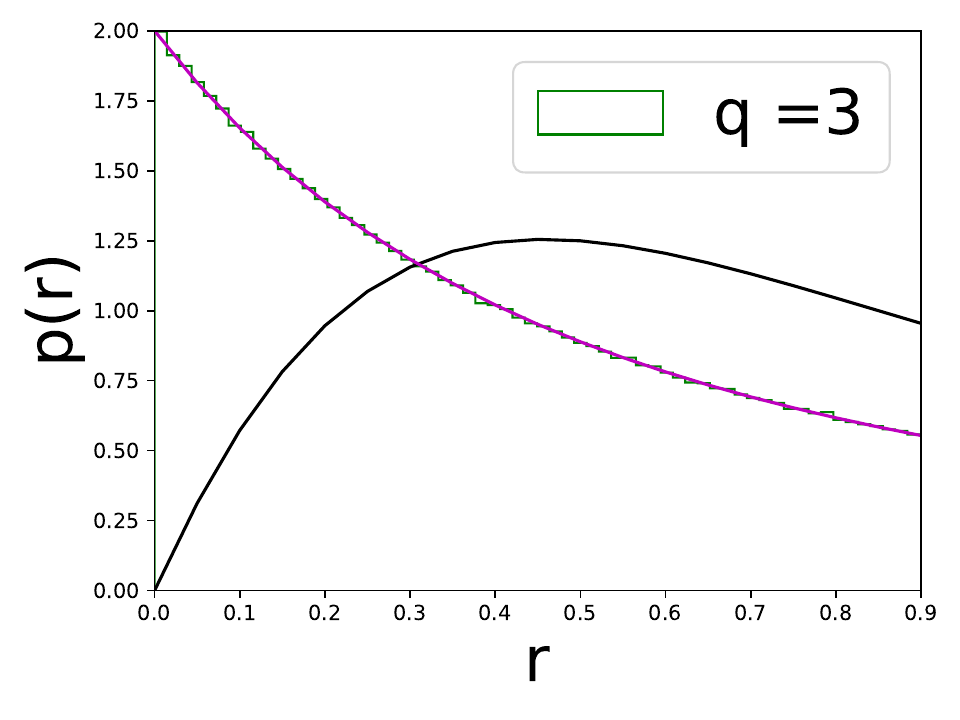}
\includegraphics[width=0.45\linewidth]{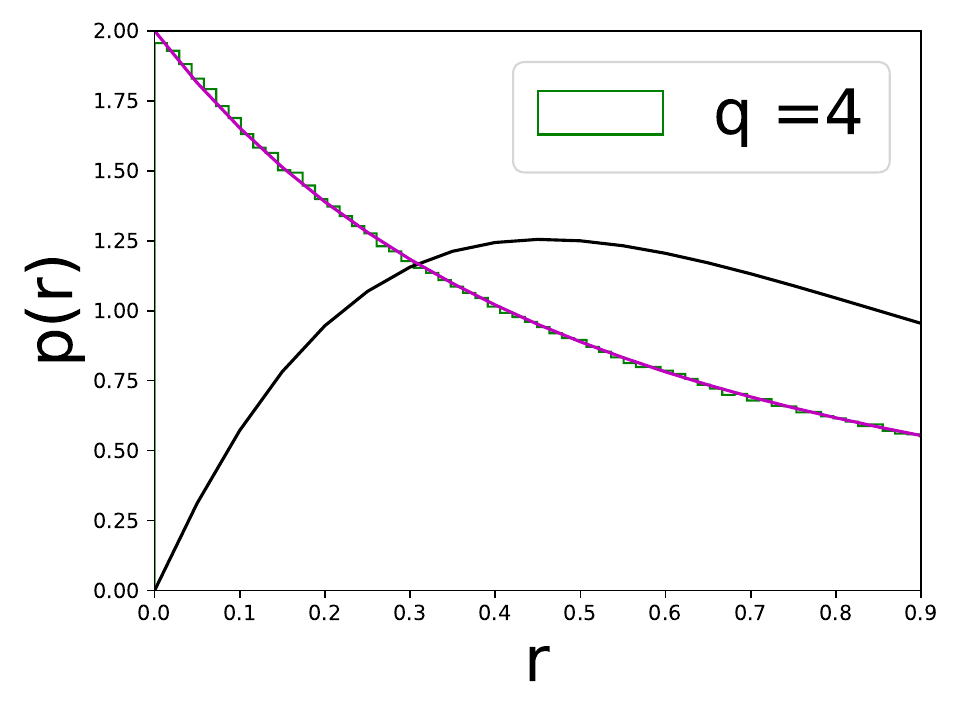}
\caption{ (a) Ratio test for the $q = 3 $ symmetry resolved physical Hamiltonian. The physical Hamiltonian $\hat{H}$ is generated with $L = 16$. (b) Ratio test for the $q = 4$ symmetry resolved physical Hamiltonian. The Hamiltonian $\hat{H}$ here has $L = 14$.     }
\label{fig:q34}
\end{figure}

\end{document}